\renewcommand{\vec}[1]{{\bf #1}}
\newcommand{\ep}{\epsilon}
\newcommand{\vep}{\varepsilon}
\newcommand{\N}{{\mathcal N}}
\newcommand{\R}{{\mathbb R}}
\newcommand{\one}{{\mathds 1}}
\newcommand{\til}{\tilde}
\newcommand{\goesto}{\rightarrow}
\newcounter{constcount}
\newcounter{numcount}
\newcommand{\leqnum}{\stackrel{(\roman{numcount})}{\leq\;}\stepcounter{numcount}}
\newcommand{\cnt}{$(\roman{numcount}$)\;\stepcounter{numcount}}
\newcommand{\rescnt}{\setcounter{numcount}{1}}
\newcommand{\DFT}{\text{DFT}}
\newcounter{thmcnt}
  \let\Oldsection\section
\renewcommand{\section}{\stepcounter{thmcnt}\Oldsection}
\newtheorem{theorem}{Theorem}
\newtheorem{lemma}{Lemma}
\newtheorem{definition}{Definition}
\newcounter{examplecounter}
\newcommand{\aln}[1]{\begin{align*}#1\end{align*}}
\newcommand{\al}[1]{\begin{align}#1\end{align}}
\begin{document}


\title{Worst-Case Source for Distributed \\ Compression with Quadratic Distortion}

\author{Ilan Shomorony$^{\dagger}$, A. Salman Avestimehr$^{\dagger}$, Himanshu Asnani$^\ast$ and Tsachy Weissman$^\ast$ \\
$^\dagger$Cornell University, Ithaca, NY\\
$^\ast$Stanford University, Stanford, CA}



\maketitle

%
%
%
%
%

\begin{abstract}
We consider the $k$-encoder source coding problem with a quadratic distortion measure.
We show that among all source distributions with a given covariance matrix $\vec K$, the jointly Gaussian source requires the highest rates in order to meet a given set of distortion constraints.
\end{abstract}

\section{Introduction}

The characterization of the rate-distortion region for the $k$-encoder source coding problem, depicted in Figure \ref{problemfig}, is one of the central open problems in network information theory.
In this problem, $k$ encoders observe different components of a random vector-valued source.
Then, without cooperating, the encoders transmit messages over rate-constrained, noiseless channels to a central decoder, which, based on the $k$ received messages, tries to reproduce the original source.
The goal is to determine which rate tuples $(R_1,...,R_k)$ allow the decoder to reproduce the source so that distortion constraints placed on each of the $k$ components are satisfied.

Most of the work on this problem has focused on the case $k=2$, and, for some specific distortion constraints, the rate-distortion region has been completely characterized.
When both sources must be reconstructed losslessly, we have the classical Slepian-Wolf problem \cite{SlepianWolf}.
When one of the two sources is available to the decoder as side-information, the rate-distortion region was characterized in \cite{AhlswedeKorner,WynerSide,WynerZiv} under different distortion constraints.
The case where one of the sources must be reconstructed losslessly while the other must satisfy an arbitrary distortion constraint was solved by Berger and Yeung \cite{BergerYeung}, and generalizes all the previous cases.

In \cite{Aaron2Terminal}, the rate-distortion region for the two-encoder source coding problem with quadratic distortion constraints and Gaussian sources was completely characterized. 
A by-product of this result was the characterization of the Gaussian source as the worst-case source for the two-encoder quadratic source coding problem, generalizing the well known fact that the Gaussian source has the largest rate-distortion function for a given variance \cite[Example 9.7]{GallagerBook}.

The importance of characterizing the Gaussian source as the worst-case source is two-fold. 
First, it justifies the study of distributed source coding problems for Gaussian sources as a way of obtaining a worst-case analysis for more practical data source models.
The second important aspect is to establish the existence of optimal codes for Gaussian sources which are robust to changes in the source distribution, i.e., they have the same performance guarantees if the sources are non-Gaussian. 

%

However, for the general k-encoder quadratic Gaussian source coding problem, it is still unknown whether the jointly Gaussian source is the worst-case source. 
The proof that the jointly Gaussian sources are the worst-case sources for the two-encoder problem in \cite{Aaron2Terminal} follows from the fact that the Berger-Tung separation-based architecture \cite{BergerMultiterminal,TungMultiterminal} is shown to be optimal for jointly Gaussian sources, and this architecture can achieve the same rate region for any source distribution with a given covariance matrix $\vec K$.
Since this separation-based architecture is not known to be optimal for the general $k$-encoder problem, the same arguments cannot be extended to the general case.
Furthermore, it is in general unclear what kind of performance guarantees can be obtained when codes designed for the $k$-encoder source coding problem with Gaussian sources are employed with non-Gaussian sources. 
Therefore, in order to address these problems, new techniques must be introduced.

\begin{figure}[ht] 
     \centering
       \includegraphics[width=\linewidth]{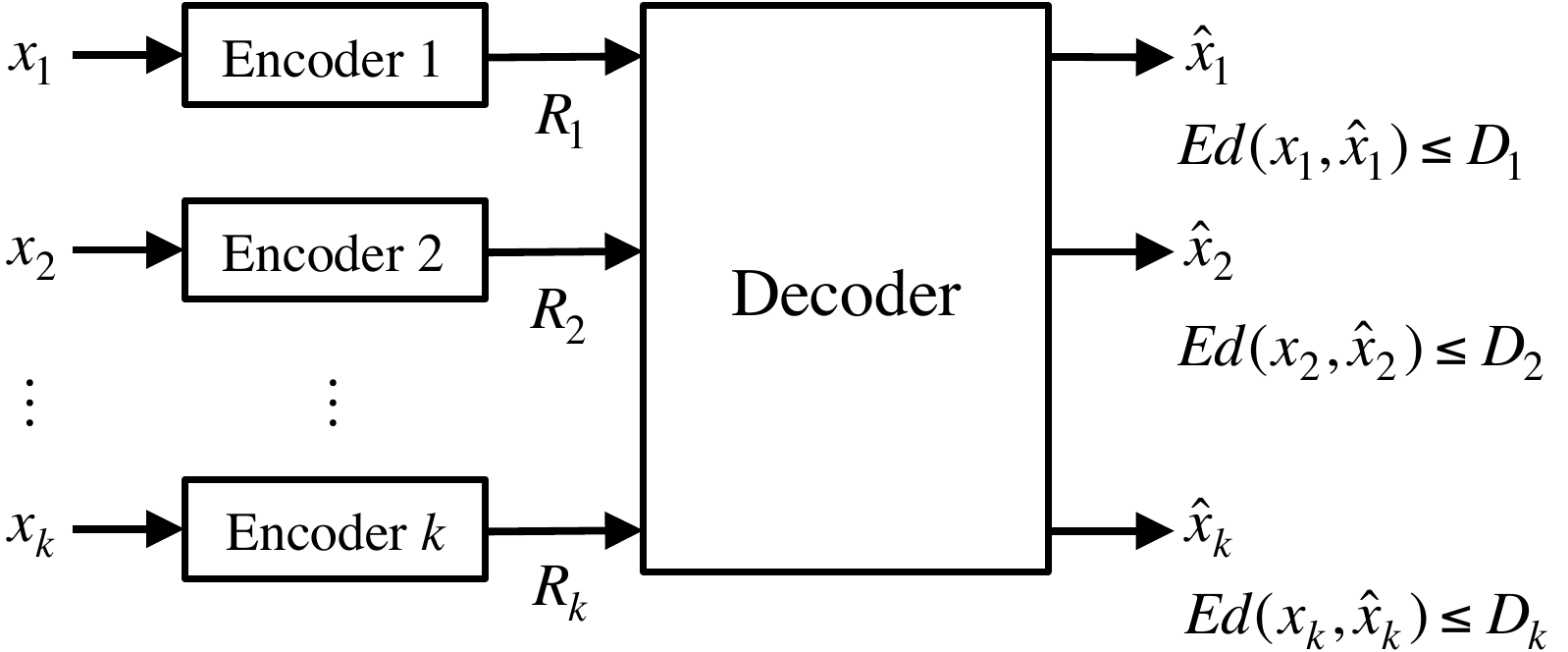} \caption{The $k$-encoder source coding problem. \label{problemfig}}
\end{figure}

Recently, it was shown in \cite{wcnoisefull} that the Gaussian noise is the worst-case noise for general multi-hop multi-flow wireless networks.
The main idea was to apply an OFDM-like scheme at all transmitters and receivers in the network in order to ``mix'' different noise realizations over time.
This mixing, if performed over sufficiently long blocks, allows the Central Limit Theorem to kick in, effectively creating a new network where the additive noises are approximately Gaussian.
This allows a coding scheme designed for a wireless network with Gaussian noise terms to achieve the same rates of reliable communication on a network with non-Gaussian noises.

In this paper, we show that similar ideas to the ones used in \cite{wcnoisefull} can be used in the quadratic $k$-encoder source coding problem, if the source is not Gaussian.
By having each encoder apply a DFT-based unitary linear transformation to its vector of source symbols, it is possible to create an approximately Gaussian source with the same covariance matrix.
This allows us to prove that, for a given covariance matrix, the jointly Gaussian source is indeed the worst-case source for the $k$-encoder source coding problem.
Moreover, this technique can be seen as a way of modifying codes designed for Gaussian sources so that they can be applied to non-Gaussian sources and still have a performance guarantee.

%
%
%

\section{Problem Setup and Main Result}

We consider the $k$-encoder rate-distortion problem with a quadratic distortion measure.
In this problem, $k$ encoders observe different components of a vector-valued i.i.d.~sequence $\{(x_1[i],...,x_k[i])\}_{i=0}^{n-1}$.
We assume that $(x_1[0],...,x_k[0])$ has an arbitrary distribution with zero mean and covariance matrix $\vec K$.
Encoder $m$ maps $\vec x_m = (x_m[0],...,x_m[n-1])$ to an integer $f_m(\vec x_m) \in \{1,...,2^{nR_m}\}$, which is transmitted noiselessly to a central decoder.
Given the $k$ integers $f_m(\vec x_m)$, $m=1,...,k$, the decoder uses decoding functions $g_1,...,g_m$ in order to obtain estimates $(\hat x_m[0],...,\hat x_m[n-1]) = g_m \left(f_1(\vec x_1),...,f_k(\vec x_k)\right)$, for $m=1,...,k$.
A code for the $k$-Encoder Rate-Distortion problem is comprised of a set of encoding and decoding functions $\left(f_1,...,f_m, g_1,...,g_k\right)$ for a given blocklength $n$.


\begin{definition}
Rate-distortion vector $(R_1,...,R_k,D_1,...,D_k)$ is achievable if, for some blocklength $n$, there exists a code $(f_1,...,f_m,g_1,...,g_m)$ for which
\al{ \label{distconstraint}
\tfrac1n E\left[\left\| \vec x_m-g_m(f_1(\vec x_1),...,f_k(\vec x_k)) \right\|^2\right] \leq D_m,
}
for $m=1,...,k$.
 \end{definition}
      
The following result establishes that the jointly Gaussian distribution is the worst-case source distribution among those with covariance matrix $\bf K$. 

\begin{theorem} \label{mainthm}
If rate-distortion vector $(R_1,...,R_k,D_1,...,D_k)$ is achievable when $(x_1[0],...,x_k[0])$ is jointly Gaussian with covariance matrix $\vec K$, then, for any $\ep > 0$, rate-distortion vector $(R_1+\ep,...,R_k + \ep,D_1 +\ep,...,D_k+\ep)$ is achievable when $(x_1[0],...,x_k[0])$ has any arbitrary distribution with covariance matrix $\vec K$.
\end{theorem}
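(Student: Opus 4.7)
The plan is to simulate an iid jointly Gaussian source with the given covariance $\vec K$ by applying a unitary ``mixing'' transformation at each encoder, run the assumed Gaussian-optimal code on the transformed sequence block-by-block, and invert the transformation at the decoder. Because unitarity preserves squared Euclidean distance, the distortion in the transformed domain equals the distortion in the original one; meanwhile the Central Limit Theorem makes the transformed samples approximately jointly Gaussian with covariance $\vec K$, so the Gaussian code essentially sees its ``home'' distribution, up to an error that vanishes as the transformation blocklength grows and can therefore be absorbed into the $\ep$ slack.

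Concretely, fix a code $(f_1,\ldots,f_k,g_1,\ldots,g_k)$ of blocklength $n$ achieving $(R_1,\ldots,R_k,D_1,\ldots,D_k)$ on the jointly Gaussian source, fix a large integer $L$, and set $N=Ln$. Let $U_N\in\R^{N\times N}$ be the \emph{same} real orthogonal matrix used at every encoder, with entries of magnitude $O(1/\sqrt{N})$ (for instance a real DFT or a Walsh--Hadamard matrix). Each encoder $m$ forms $\tilde{\vec x}_m=U_N\vec x_m$, partitions it into $L$ subblocks $\tilde{\vec x}_m^{(1)},\ldots,\tilde{\vec x}_m^{(L)}\in\R^n$, and transmits $f_m(\tilde{\vec x}_m^{(\ell)})$ for each $\ell$, using rate exactly $R_m$ per source symbol. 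The decoder forms $\hat{\tilde{\vec x}}_m^{(\ell)}=g_m(f_1(\tilde{\vec x}_1^{(\ell)}),\ldots,f_k(\tilde{\vec x}_k^{(\ell)}))$, assembles $\hat{\tilde{\vec x}}_m$, and outputs $\hat{\vec x}_m=U_N^T\hat{\tilde{\vec x}}_m$. Orthogonality of $U_N$ yields the pivotal identity
\[
\|\vec x_m-\hat{\vec x}_m\|^2
= \|\tilde{\vec x}_m-\hat{\tilde{\vec x}}_m\|^2
= \sum_{\ell=1}^L \|\tilde{\vec x}_m^{(\ell)}-\hat{\tilde{\vec x}}_m^{(\ell)}\|^2,
\]
reducing the per-symbol distortion of the new code to the average over $\ell$ of the distortion incurred by the Gaussian code on $(\tilde{\vec x}_1^{(\ell)},\ldots,\tilde{\vec x}_k^{(\ell)})$.

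For every fixed $\ell$, a multivariate Lindeberg--Feller CLT shows that as $N\to\infty$ the $\R^{k\times n}$-valued block $(\tilde{\vec x}_1^{(\ell)},\ldots,\tilde{\vec x}_k^{(\ell)})$ converges in distribution to $n$ iid samples of $\mathcal{N}(\vec 0,\vec K)$: each coordinate is a linear combination of $N$ iid $\R^k$-valued source columns with coefficients of size $O(1/\sqrt{N})$, so the Lindeberg condition holds under the assumed finite second moment, and because distinct rows of $U_N$ are orthonormal, different positions within the block are asymptotically uncorrelated and hence, by the joint CLT, asymptotically independent.

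The main obstacle is that the distortion is an \emph{unbounded} quadratic functional of the source, so weak convergence alone does not immediately imply convergence of the expected squared error to the Gaussian value. Since the decoder output lies in the fixed finite reconstruction alphabet of the Gaussian code, $\|\hat{\tilde{\vec x}}_m^{(\ell)}\|$ is uniformly bounded, and the only term that needs to be controlled is $E\bigl[\|\tilde{\vec x}_m^{(\ell)}\|^2\,\one_{\|\tilde{\vec x}_m^{(\ell)}\|>T}\bigr]$ uniformly in $N$. The uniform identity $E\|\tilde{\vec x}_m^{(\ell)}\|^2=nK_{mm}$ combined with a truncation argument---first replace the original source by a bounded-support version with essentially the same covariance (possible because its second moment is finite), establish convergence of expected distortion for the truncated source via the weak limit, and then untruncate using dominated convergence---delivers the required uniform integrability. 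Choosing $L$ large enough then makes each per-block expected distortion lie within $\ep$ of $D_m$ for every $m$, while the rates stay exactly at $R_m$ (so the $\ep$ slack in the rate component of the theorem is in fact not needed), yielding the claimed code and proving the theorem.
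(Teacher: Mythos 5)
Your overall architecture is the same as the paper's (apply a measure-preserving DFT/Hadamard-type mixing at each encoder, run the Gaussian code per subblock, invert at the decoder, invoke a Lindeberg CLT plus a uniform-integrability argument to carry the distortion guarantee over), and your identification of unboundedness of the quadratic distortion as an obstacle, handled via $E\|\tilde{\vec x}_m^{(\ell)}\|^2=nK_{mm}$ and truncation, matches the paper's use of a generalized dominated convergence theorem. However, there is a genuine gap: you never address the fact that the given Gaussian-optimal code $(f_1,\ldots,f_k,g_1,\ldots,g_k)$ is only assumed to be \emph{measurable}. Weak convergence of $(\tilde{\vec x}_1^{(\ell)},\ldots,\tilde{\vec x}_k^{(\ell)})$ to the iid Gaussian blocks gives convergence of $E[h(\cdot)]$ only for functionals $h$ that are continuous almost everywhere with respect to the \emph{limit} law (continuous mapping theorem); the distortion functional involves the quantization maps $f_m$, whose discontinuity sets can have positive Lebesgue (hence positive Gaussian) measure for an arbitrary code, in which case neither convergence in distribution of the per-block distortion nor convergence of its expectation follows, and no amount of uniform integrability or truncation repairs this. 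Your step ``establish convergence of expected distortion for the truncated source via the weak limit'' silently assumes this a.e.\ continuity.

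This is precisely why the paper proves Lemma \ref{contlemma} first: it replaces the given code by one whose encoder cells are finite unions of rectangles (so each $\tilde f_m$ is discontinuous only on a Lebesgue-null set), at the cost of an extra output symbol, i.e.\ a rate penalty of $\ep$ and a distortion penalty of $\ep'$. That regularization is also why the theorem concedes $R_m+\ep$; your remark that ``the rates stay exactly at $R_m$, so the $\ep$ slack in the rate component is in fact not needed'' is symptomatic of the missing step rather than an improvement. A secondary, fixable difference: you apply one $N\times N$ unitary and cut it into subblocks, so the $n$ positions within a subblock are only \emph{asymptotically} independent and you need a $kn$-dimensional CLT with an extra argument for asymptotic independence across positions; the paper instead applies a $b\times b$ transform to disjoint length-$b$ blocks and interleaves, which makes each transformed length-$n$ block exactly i.i.d.\ and reduces everything to a $k$-dimensional Cram\'er--Wold/Lindeberg argument for a single sample, together with a worst-block bound over $\ell$. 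With Lemma \ref{contlemma} (or an equivalent regularization) inserted, your argument would go through along the paper's lines.
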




\section{Proof of Main Result}

In order to prove Theorem \ref{mainthm}, we will need the following lemma, whose proof is in the Appendix.

\begin{lemma} \label{contlemma}
Assume $(x_1[0],...,x_k[0])$ is jointly Gaussian.
For any code $(f_1,...,f_k,g_1,...,g_k)$ that achieves rate-distortion vector $(R_1,...,R_k,D_1,...,D_k)$ and any $\ep, \ep' > 0$, one can find another code $(\tilde f_1,...,\tilde f_k,\tilde g_1,...,\tilde g_k)$ that achieves the  rate-distortion vector $(R_1+\ep,...,R_k + \ep,D_1 +\ep',...,D_k+\ep')$ 
for which the set of discontinuities of each $\tilde f_m$, $m=1,...,k$, has Lebesgue measure zero.
\end{lemma}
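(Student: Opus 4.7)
The plan is to construct each $\tilde f_m$ by approximating its pre-images $A_{m,j} := f_m^{-1}(j)$, $j=1,\ldots,2^{nR_m}$, by finite unions of open axis-aligned cubes. Such approximants have boundary contained in a finite union of coordinate hyperplanes, which has Lebesgue measure zero, so the resulting encoder will be locally constant off a null set.

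Regularity of the Gaussian law $\mu_m$ of $\vec x_m$ on $\R^n$ (open sets are countable unions of axis-aligned open cubes, and truncating such a union at a finite level approximates the measure to arbitrary accuracy) gives, for any $\eta > 0$ and any Borel $A$, a finite union of open cubes $B$ with $\mu_m(A \triangle B) < \eta$. Applying this with $\eta = \delta/2^{nR_m}$ to each $A_{m,j}$, I would obtain sets $B_{m,j}$ and then force them into a genuine partition by setting
\[
\tilde A_{m,j} := B_{m,j} \setminus \bigcup_{j' < j} B_{m,j'} \quad (j \geq 2), \qquad \tilde A_{m,1} := \R^n \setminus \bigcup_{j \geq 2} \tilde A_{m,j}.
\]
Defining $\tilde f_m(\vec x) := j$ for $\vec x \in \tilde A_{m,j}$ and $\tilde g_m := g_m$, the discontinuity set of $\tilde f_m$ lies inside the union of boundaries of the $B_{m,j}$'s, which is a finite union of hyperplanes and hence of Lebesgue measure zero. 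The range size of $\tilde f_m$ is still $2^{nR_m}$, so the rate is preserved exactly and the slack $\ep$ is not actually consumed by this construction.

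For the distortion, elementary case analysis on the partition yields the inclusion $\{\tilde f_m \neq f_m\} \subset \bigcup_j (A_{m,j} \triangle B_{m,j})$, so $P(\tilde f_m(\vec x_m) \neq f_m(\vec x_m)) \leq \delta$ and $P(E^c) \leq k\delta$ for $E := \bigcap_m \{\tilde f_m(\vec x_m) = f_m(\vec x_m)\}$. On $E$ the new scheme reproduces the original decoder output, contributing at most $nD_m$ to $E[\|\vec x_m - \tilde g_m(\cdot)\|^2]$. On $E^c$,
\[
E\bigl[\|\vec x_m - \tilde g_m(\cdot)\|^2 \one_{E^c}\bigr] \leq 2 E\bigl[\|\vec x_m\|^2 \one_{E^c}\bigr] + 2 M_g^2\, P(E^c),
\]
where $M_g := \max_{\vec j} \|g_m(\vec j)\|$ is finite because $g_m$ has finite domain. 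The first term is controlled by Cauchy--Schwarz using $E[\|\vec x_m\|^4] < \infty$, and both terms vanish as $\delta \to 0$. Choosing $\delta$ small enough keeps the excess below $n\ep'$, giving $\tilde D_m \leq D_m + \ep'$.

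The main obstacle is the distortion bound on the exceptional event $E^c$: the squared error is not a priori bounded there, so some integrability is required to exploit the small probability. This is where the Gaussian hypothesis on $\vec x_m$ is used, in the form of finite fourth (indeed all) moments and hence uniform integrability of $\|\vec x_m\|^2$; the boundedness of the decoder's range takes care of the reconstruction term. Everything else reduces to standard measure-theoretic approximation of Borel sets by finite unions of open cubes.
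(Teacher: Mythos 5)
Your proposal is correct and follows essentially the same route as the paper's proof: approximate each encoder's pre-images by finite unions of rectangles/cubes (the paper invokes Billingsley's Theorem 11.4 where you derive the approximation from regularity of the Gaussian law), redefine the encoder on the resulting near-partition, bound the probability of disagreement by a union bound, control the distortion on the bad event via Cauchy--Schwarz, finite Gaussian fourth moments, and the bounded decoder range, and note that the discontinuities lie in the measure-zero union of rectangle boundaries. The only cosmetic difference is that you absorb the leftover region into symbol $1$ rather than adding an extra symbol $0$ (which is why the paper needs the rate slack $\ep$ and a modified decoder); both variants work.
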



\begin{proof}[Proof of Theorem \ref{mainthm}]
Suppose the rate-distortion vector $(R_1,...,R_k,D_1,...,D_k)$ is achievable in the case where $(x_1[0],...,x_k[0])$ is jointly Gaussian with covariance matrix $\vec K$.
Fix $\ep > 0$.
From Lemma \ref{contlemma}, we can assume that we have a code $(f_1,...,f_k,g_1,...,g_k)$ with blocklength $n$, which 
achieves rate-distortion vector $(R_1+\ep,...,R_k + \ep,D_1 +\ep/2,...,D_k+\ep/2)$
if $(x_1[0],...,x_k[0])$ is jointly Gaussian, and such that
the set of discontinuities of each $f_m$, $m=1,...,k$, has Lebesgue measure zero.
We will then construct new encoding functions $\til f_1,...,\til f_k$ with blocklength $nb$, for a large integer $b$, where $\til f_m$ is applied to the source sequence $\vec x_m = (x_m[0],...,x_m[nb-1])$, for $m=1,...,k$.
The construction of these new encoding functions is illustrated in Figure \ref{encodingfig}.
\begin{figure}[ht] 
     \centering
       \includegraphics[width=\linewidth]{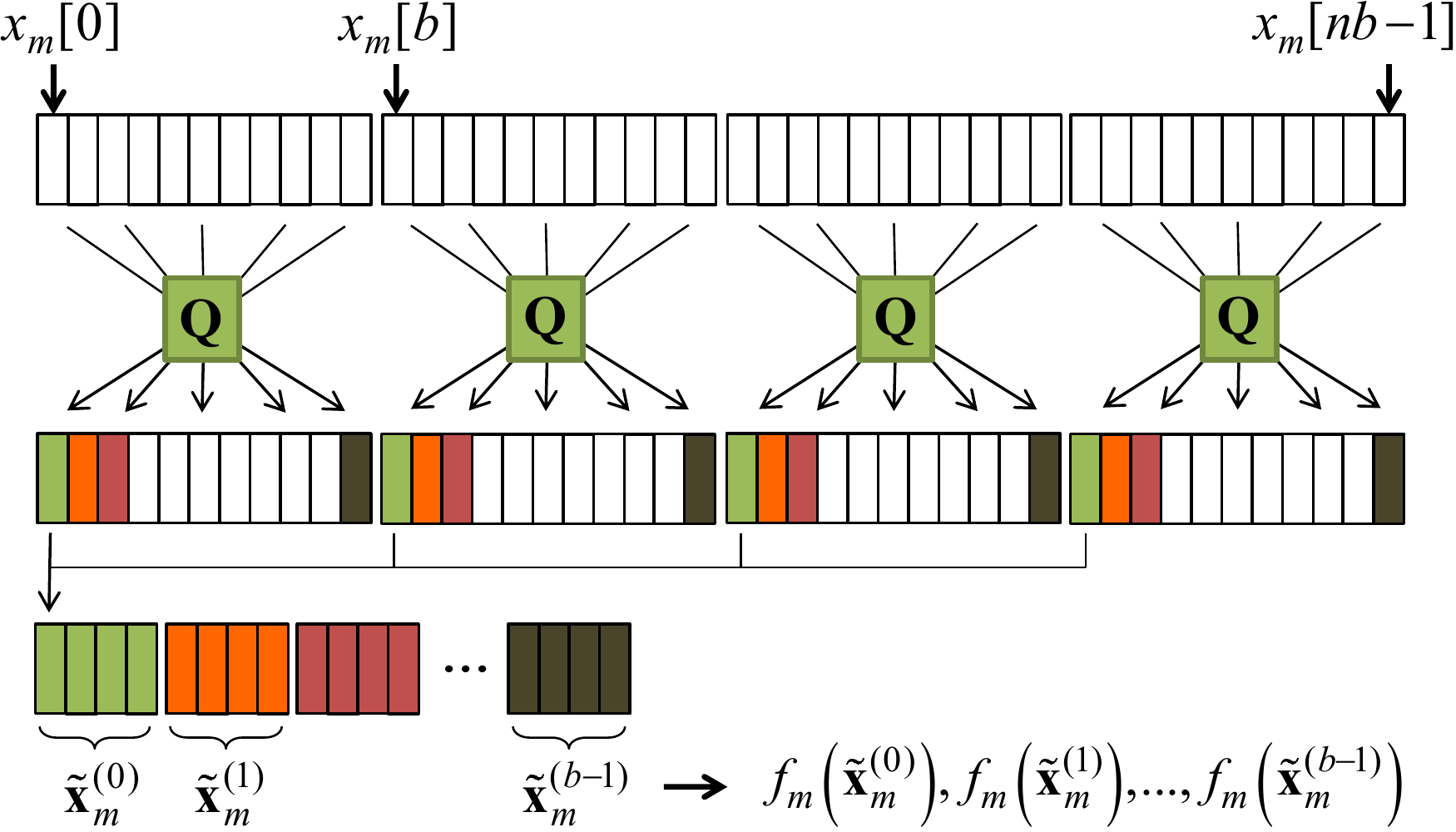} \caption{Illustration of the new encoding procedure for encoder $m$. \label{encodingfig}}
\end{figure}
Encoder $m$ starts by applying a unitary ($2$-norm preserving) linear transformation $\vec Q$ (defined later) to each block of length $b$.
The $n$ resulting blocks of length $b$ are then interleaved, generating $b$ length-$n$ vectors $\til{\vec x}_m^{(0)},...,\til{\vec x}_m^{(b-1)}$, as shown in Figure \ref{encodingfig}.
The original encoding function $f_m$ (which takes as input a length-$n$ vector) is then individually applied to each $\til{\vec x}_m^{(i)}$, for $i=0,...,b-1$.
This generates $b$ integers in $\{1,..., 2^{n(R_m+\ep)}\}$ which can then be combined into a single integer from $\{1,...,2^{nb(R_m+\ep)}\}$ to produce the encoder output $\tilde f_m(\vec x_m)$.

At the decoder side, each $\til f_m(\vec x_m)$, for $m=1,...,k$, is first broken into the $b$ original integers from $\{1,..., 2^{n(R_m+\ep)}\}$.
Then, using the original decoding function $g_m$, the decoder obtains estimates of $\vec{\tilde{x}}_m^{(i)}$, for $i=0,...,b-1$, which can then be converted to an estimate of $\vec x_m$ by applying $\vec Q^{-1}$ $n$ times.
This defines the new decoding functions $\til g_m$, $m=1,...,k$.

We define the unitary matrix $\vec Q$ by having the entry in the $(i+1)$th row and $(j+1)$th column be
\aln{
Q{(i,j)} =\left\{ \begin{array}{ll} 1/\sqrt{b} & \text{if $i = 0$} \\ \sqrt{2/b} \cos\left( \frac{2 \pi j \ell}{b} \right) & \text{if $i = 1,...,\frac b2 - 1$} \\  (-1)^j/\sqrt{b} & \text{if $i = \frac b2 $} \\ \sqrt{2/b} \sin\left( \frac{2 \pi j (\ell-b/2) }{b} \right) & \text{if $i = \frac b2+1,...,b - 1$} \end{array} \right. 
}
for $i,j \in \{0,...,b-1\}$.
We point out that applying the linear transformation $\vec Q$ to a vector $\vec x$ can be seen as first taking the $\DFT$ of $\vec x$, then separating the real and imaginary parts of the resulting vector, and renormalizing them so that the resulting transformation is unitary.
Checking that $\vec Q$ is a unitary transformation, i.e., that $\|\vec Q \vec x\| = \|\vec x\|$ for any $\vec x \in \R^b$, is straightforward and thus omitted.

Our next goal is to show that, by choosing $b$ large enough, we can make the distortion of this new code arbitrarily close to the distortion of the original code applied to the Gaussian source.
We start by noticing that, since $\vec Q$ is a unitary linear transformation, the distortion of our new code can be written in terms of ${\vec {\til x}}_m^{(\ell)}$ for $\ell = 0,...,b-1$ as
\aln{
\frac1{b} \sum_{\ell = 0}^{b-1} \frac1n\left\| \vec{\til x}_m^{(\ell)} - g_m\left( f_1(\vec{\til x}_1^{(\ell)}),...,f_k(\vec{\til x}_k^{(\ell)}) \right)  \right\|^2.
}
For each $b=1,2,...$, we will let 
\aln{
\ell_b = \arg \max_{0 \leq \ell \leq b-1} E \left\| \vec{\til x}_m^{(\ell)} - g_m\left( f_1(\vec{\til x}_1^{(\ell)}),...,f_k(\vec{\til x}_k^{(\ell)}) \right)  \right\|^2,
}
i.e., the $\ell_b$th length-$n$ block has the largest expected distortion.
Note that $\left\{ \left(\til x_1^{(\ell_b)}[i],...,\til x_k^{(\ell_b)}[i]\right) \right\}_{i=0}^{n-1}$ is an i.i.d.~sequence of length-$k$ random vectors.
We will show that it converges in distribution to a sequence of i.i.d.~jointly Gaussian random vectors with covariance matrix $\vec K$, as $b \to \infty$.
Clearly, it suffices to show that $\left(\til x_1^{(\ell_b)}[0],...,\til x_k^{(\ell_b)}[0]\right)$ converges in distribution to a jointly Gaussian random vector with covariance matrix $\vec K$, as $b \to \infty$.
In order to use the Cram\'er-Wold Theorem, we fix an arbitrary vector $(t_1,...,t_k) \in \R^k$ and we notice that
\al{ 
\sum_{m=1}^k t_m \til x_m^{(\ell_b)}[0] 
& = \sum_{m=1}^k t_m \sum_{j=0}^{b-1} x_m[j] ~ Q(\ell_b,j) \nonumber \\
& = \sum_{j=0}^{b-1} \left( \sum_{m=1}^k t_m x_m[j] \right) Q(\ell_b,j). \label{cramerexp}
}
To characterize the convergence in distribution of (\ref{cramerexp}), we will need the following result.
\begin{theorem}[Lindeberg's Central Limit Theorem \cite{billingsley}] \label{lindthm}
Suppose that for each $b = 1,2,...$, the random variables
$
Y_{b,1}, Y_{b,2},..., Y_{b,b}
$
are independent.
In addition, suppose that, for all $b$ and $i \leq b$, $E[Y_{b,i}] = 0$, and let
\al{
s_b^2 = \sum_{i=1}^b E\left[ Y_{b,i}^2 \right].   \label{sbdef}
} 
Then, if for all $\vep > 0$, Lindeberg's condition
\al{
\frac{1}{s_b^2}\sum_{i=1}^b E\left(Y_{b,i}^2 \, \one\left\{|Y_{b,i}|  \geq \vep s_b\right\}\right) \goesto 0 \text{ as $b \goesto \infty$}
\label{lind}
}
holds, we have that
\aln{
\frac{\sum_{i=1}^b Y_{b,i}}{s_b} \stackrel{d}{\goesto} \N(0,1).
}
\end{theorem}
To apply Lindeberg's CLT, we will let, for $j=0,...,b-1$,
\aln{
Y_{b,j+1} = \sqrt b \left( \sum_{m=1}^k t_m x_m[j] \right) Q(\ell_b,j).
}
Then, if we let $\vec K_{u,v}$ be the entry in the $u$th row and $v$th column of $\vec K$, we have
\aln{
s_b^2 & = \sum_{j=1}^b E\left[ Y_{b,j}^2 \right] = b \sum_{j=1}^b  Q^2(\ell_b,j-1) \\
& \quad \quad \quad \times E \left( \sum_{m=1}^k t_m x_m[j-1] \right)^2 \\
& = b \sum_{1 \leq u,v \leq k} t_u t_v \vec K_{u,v}  \sum_{j=1}^b  Q^2(\ell_b,j-1)  \\
& = b \sum_{1 \leq u,v \leq k} t_u t_v \vec K_{u,v},
} 
regardless of the value of $\ell_b$.
In order to verify Lindeberg's condition, we define $\sigma^2 =\sum_{1 \leq u,v \leq k} t_u t_v \vec K_{u,v}$ and we let $U_{b,j} = Y_{b,j}^2 \, \one \left\{|Y_{b,j}|  \geq \vep s_b \right\} = Y_{b,j}^2 \, \one \left\{|Y_{b,j}|  \geq \vep \sigma \sqrt b \right\}$.
Consider any sequence $j_b$, for $b=1,2,...$, such that $j_b \in \{1,...,b\}$, and any $\delta > 0$.
Then we have that
\aln{
\Pr\left(U_{b,j_b} < \delta\right) & \geq \Pr\left(|Y_{b,j_b}| < \vep \sigma \sqrt{b} \right) \\
& \geq \Pr\left(\left|\sum_{m=1}^k t_m x_m[j_b-1]\right|\sqrt2 < \vep \sigma \sqrt{b} \right) \\
& = \Pr\left(\left|\sum_{m=1}^k t_m x_m[0]\right| < \vep \sigma \sqrt{b/2} \right) \goesto 1, 
}
as $b \goesto \infty$, which means that $U_{b,j_b} \stackrel{p}{\goesto} 0$ as $b \goesto \infty$.
Moreover, we have that 
\aln{
\Pr\left( |U_{b,j_b}| \geq t \right) & \leq \Pr\left[ 2 \left(\sum_{m=1}^k t_m x_m[j_b-1]\right)^2 \geq t\right] \\
& =\Pr\left[ 2 \left(\sum_{m=1}^k t_m x_m[0]\right)^2 \geq t\right]
}
for $t > 0$, and
\aln{
E \left[ 2 \left(\sum_{m=1}^k t_m x_m[0]\right)^2 \right] = 2 \sigma^2 < \infty,
}
and by the Dominated Convergence Theorem 
\cite[pages 338-339]{billingsley}, we have that $E[U_{b,j_b}]\goesto 0$ as $b \to \infty$.
We conclude that
\aln{
\frac{1}{s_b^2}\sum_{i=1}^b E\left(Y_{b,j}^2 \, \one\left\{|Y_i|  \geq \vep s_b\right\}\right) & = \frac{1}{\sigma^2 b} \sum_{j=1}^b E\left[ U_{b,j} \right] \\
& \leq \frac{1}{\sigma^2} \max_{1 \leq j \leq b} E\left[ U_{b,j} \right]
 \to 0, 
}
as $b \to \infty$, and Lindeberg's condition (\ref{lind}) is satisfied for any $\vep > 0$.
Hence, from Theorem \ref{lindthm}, we have that
\aln{
\frac{\sum_{i=1}^{b} Y_{b,j}}{\sigma \sqrt{b}} \stackrel{d}{\goesto} \N(0,1), 
} 
which implies, from (\ref{cramerexp}), that
\aln{
\sum_{m=1}^k t_m \til x_m^{(\ell_b)}[0] & =  \sum_{j=0}^{b-1} \left( \sum_{m=1}^k t_m x_m[j] \right) Q(\ell_b,j) \\
& = \frac{\sum_{j=1}^{b} Y_{b,j}}{\sqrt{b}} \stackrel{d}{\goesto} \N(0,\sigma^2).
}
Finally, since for a jointly Gaussian vector $(y_1,...,y_k)$ with mean zero and covariance matrix $\vec K$, we have $\sum_{m=1}^k t_m y_m \sim \N(0,\sigma^2)$, we conclude, from the Cram\'er-Wold Theorem that $\left(\til x_1^{(\ell_b)}[0],...,\til x_k^{(\ell_b)}[0]\right)$ converges in distribution to a jointly Gaussian random vector with zero mean and covariance matrix $\vec K$, as $b \to \infty$.

Now, since the set of discontinuities of $f_m$, for $m=1,...,k$, has Lebesgue measure zero, it is easy to see that the mapping
\aln{
\left\{\vec{\til x}_m^{(\ell)} \right\}_{m=1}^k \mapsto \left\| \vec{\til x}_m^{(\ell)} - g_m\left( f_1(\vec{\til x}_1^{(\ell)}),...,f_k(\vec{\til x}_k^{(\ell)}) \right)  \right\|^2,
}
for $m=1,...,k$,
must also have a set of discontinuities with Lebesgue measure zero.
We conclude that 
\aln{
& \left\| \vec{\til x}_m^{(\ell_b)} - g_m\left( f_1(\vec{\til x}_1^{(\ell_b)}),...,f_k(\vec{\til x}_k^{(\ell_b)}) \right)  \right\|^2 \\
& \quad \quad  \stackrel{d}{\to} 
\left\| \vec y_m - g_m\left( f_1(\vec y_1),...,f_k(\vec y_k) \right)  \right\|^2,
}
as $b \to \infty$, 
where $\vec y_m = (y_m[0],...,y_m[n-1])$, for $m=1,...,k$, and $\left\{ \left(y_1[i],...,y_k[i]\right) \right\}_{i=0}^{n-1}$  is an i.i.d.~sequence such that $(y_1[0],...,y_k[0])$ is jointly Gaussian with zero mean and covariance matrix $\vec K$.
Moreover, we have that 
\aln{
& \left\| \vec{\til x}_m^{(\ell_b)} - g_m\left( f_1(\vec{\til x}_1^{(\ell_b)}),...,f_k(\vec{\til x}_k^{(\ell_b)}) \right)  \right\|^2 \\
& \quad \quad \leq 2 \left\| \vec{\til x}_m^{(\ell_b)} \right\|^2 + 2 \left\| g_m\left( f_1(\vec{\til x}_1^{(\ell_b)}),...,f_k(\vec{\til x}_k^{(\ell_b)}) \right)  \right\|^2 \\
& \quad \quad \leq 2 \left\| \vec{\til x}_m^{(\ell_b)} \right\|^2 + 2 \max_{c_1,...,c_k} \left\| g_m(c_1,...,c_k) \right\|^2,
}
and also that
\aln{
E \left\|\vec{\til x}_m^{(\ell_b)} \right\|^2 & = n E \left(  \sum_{j=0}^{b-1} x_m[j] ~ Q(\ell_b,j) \right)^2 \\
& = n \vec K_{m,m} \sum_{j=0}^{b-1} Q^2(\ell_b,j) \\ 
& = n \vec K_{m,m} < \infty.
}
Thus, from a variation of the Dominated Convergence Theorem (see Problem 16.4 in \cite{billingsley}), we conclude that, as $b\to\infty$,
\aln{
& E\left\| \vec{\til x}_m^{(\ell_b)} - g_m\left( f_1(\vec{\til x}_1^{(\ell_b)}),...,f_k(\vec{\til x}_k^{(\ell_b)}) \right)  \right\|^2 \\
& \quad  \to E\left\| \vec y_m - g_m\left( f_1(\vec y_1),...,f_k(\vec y_k) \right)  \right\|^2 \leq n (D_m+\ep/2).
}
Therefore, we can choose $b$ sufficiently large so that 
\aln{
& \frac1n E \left\| \vec{\til x}_m^{(\ell_b)} - g_m\left( f_1(\vec{\til x}_1^{(\ell_b)}),...,f_k(\vec{\til x}_k^{(\ell_b)}) \right)  \right\|^2 \\ 
& \quad \quad \quad \quad \leq \frac1n E \left\| \vec y_m - g_m\left( f_1(\vec y_1),...,f_k(\vec y_k) \right)  \right\|^2 + \ep/2 \\
& \quad \quad \quad \quad \leq D_m + \ep.
}
The expected distortion of our code $(\tilde f_1,...,\tilde f_k,\tilde g_1,...,\tilde g_k)$ (with blocklength $nb$) thus satisfies
\aln{
& \frac1{nb} \sum_{\ell = 0}^{b-1} E \left\| \vec{\til x}_m^{(\ell)} - g_m\left( f_1(\vec{\til x}_1^{(\ell)}),...,f_k(\vec{\til x}_k^{(\ell)}) \right)  \right\|^2 \\ 
& \quad \quad \quad \quad \leq \frac1n E \left\| \vec{\til x}_m^{(\ell_b)} - g_m\left( f_1(\vec{\til x}_1^{(\ell_b)}),...,f_k(\vec{\til x}_k^{(\ell_b)}) \right)  \right\|^2 \\
& \quad \quad \quad \quad \leq D_m + \ep,
}
for $m=1,...,k$.
This concludes the proof of Theorem \ref{mainthm}. 
\end{proof}

\begin{appendix}

\section{Appendix}


\begin{proof}[Proof of Lemma \ref{contlemma}]
Let $(x_1[0],...,x_k[0])$ be jointly Gaussian.
If we assume that the rate-distortion vector $(R_1,...,R_k,D_1,...,D_k)$ is achievable, for some blocklength $n$, there exists a code $(f_1,...,f_m,g_1,...,g_m)$  for which (\ref{distconstraint})
is satisfied for $m=1,...,k$.
We follow the construction from \cite{chensemicontinuity} to build a code $(\til f_1,...,\til f_m,\til g_1,...,\til g_m)$ with the same blocklength $n$, which satisfies
\aln{ 
\tfrac1n E \left\| \vec x_m-\til g_m(\til f_1(\vec x_1),...,\til f_k(\vec x_k)) \right\|^2  \leq D_m + \ep'
}
for $m=1,...,k$.

Since our code can be repeated over multiple blocks of length $n$, we may assume that $n$ is large enough so that $2^{nR_m} + 1 \leq 2^{n(R_m+\ep)}$ for each $m$.
Focus on encoder $f_1$, and let $B_j = f_1^{-1}(j)$, for $j \in \{1,...,2^{nR_1}\}$.
Then, the $B_j$'s are a partition of $\R_n$.
For each $j$, from Theorem 11.4 in \cite{billingsley}, for any $\delta > 0$, there exists a countable (in fact, finite) union of disjoint bounded rectangles $\tilde B_j$ such that $\Pr[\vec x_1 \in B_j \vartriangle \tilde B_j] < \delta$.
Then we define $\tilde f_1$ as
\aln{
\tilde f_1(\vec x_1) = \left\{ \begin{array}{ll} j & \text{if $\vec x \in \tilde B_j \setminus \bigcup_{i \ne j}\tilde B_i$} \\ 0 & \text{otherwise.} \end{array} \right.
}
We create the encoders $\tilde f_2,...,\tilde f_k$ in the same way.
For $m=1,...,k$, our decoders will be
\aln{
\tilde g_m(j_1,...,j_k) = \left\{ \begin{array}{ll} g_m(j_1,...,j_k) & \text{if $j_i \ne 0$ for $i=1,...,k$} \\ 0 & \text{otherwise.} \end{array} \right.
}
The new code is similar to the original one in the sense that, if we let $A$ be the event 
\aln{
& \left\{ g_m\left(f_1(\vec x_1),...,f_k(\vec x_k)\right) \ne \tilde g_m\left(\tilde f_1(\vec x_1),...,\tilde f_k(\vec x_k)\right) \right. \\
& \hspace{20mm} \left. \text{ for some } m \in \{1,...,k\} \right\} } 
then, by the union bound,
\aln{
\Pr\left[A\right] \leq \delta \sum_{m=1}^k 2^{nR_m}.}
It is clear that this new code has rates at most $R_1 + \ep,...,R_k+\ep$.
Following the derivation in \cite{chensemicontinuity}, the distortion for decoder $g_1$ satisfies
\aln{
& E\left[ \left\| \vec x_1 - \tilde g_1 \left(\tilde f_1(\vec x_1),...,\tilde f_k(\vec x_k)\right) \right\|^2 \right] \\ 
& \leq E\left[ \left\| \vec x_1 - \tilde g_1 \left(\tilde f_1(\vec x_1),...,\tilde f_k(\vec x_k)\right) \right\|^2 \one_{A^c} \right] \\ 
& \quad \quad + E\left[ \left\| \vec x_1 - \tilde g_1 \left(\tilde f_1(\vec x_1),...,\tilde f_k(\vec x_k)\right) \right\|^2 \; \one_A \right] \\
& \leqnum E\left[ \left\| \vec x_1 - g_1 \left( f_1(\vec x_1),..., f_k(\vec x_k)\right) \right\|^2  \right]  + M \sqrt \delta \\
& \leq n D_1 + M \sqrt\delta,
} \rescnt
where \cnt follows by using Cauchy-Schwarz to obtain
\aln{
& E\left[ \left\| \vec x_1 - \tilde g_1 \left(\tilde f_1(\vec x_1),...,\tilde f_k(\vec x_k)\right) \right\|^2 \one_A \right] \\ 
& \leq 2 E\left[ \left\| \vec x_1 \right\|^2 \one_A \right] \ + 2 E\left[  \left\| \tilde g_1 \left(\tilde f_1(\vec x_1),...,\tilde f_k(\vec x_k)\right) \right\|^2 \one_A \right] \\
& \leq 2 E \left[ \left\| \vec x_1 \right\|^4 \right]^{1/2} E \left[   \one_A \right]^{1/2}  \\ 
& \quad \quad + 2  \max_{\vec x} \left\| \tilde g_1 \left(\tilde f_1(\vec x_1),...,\tilde f_k(\vec x_k)\right) \right\|^2  E \left[   \one_A \right]^{1/2} \\
& \leq \left( 2 E \left[ \left\| \vec x_1 \right\|^4 \right]^{1/2} + 2  \max_{\vec x} \left\| \tilde g_1 \left(\tilde f_1(\vec x_1),...,\tilde f_k(\vec x_k)\right) \right\|^2 \right) \\ 
& \quad \quad \times \sqrt{\delta \sum_{m=1}^k 2^{nR_m}} \\
& = M \sqrt \delta,
}
where $M$ is a finite number, independent of $\delta$.
Therefore, we can choose $\delta > 0$ sufficiently small so that $M \sqrt\delta \leq n\ep'$, and the distortion of each decoder $g_m$ is at most $D_m + \ep'$.
%
%
Finally, we need to show that the set of discontinuities of each $\tilde f_m$ has measure zero.
If we again focus on $\tilde f_1$, this function partitions $\R_n$ into $\hat B_j = \tilde B_j \setminus \cup_{i\ne j} \tilde B_i$ for $j=1,...,2^{nR_1}$ and $\hat B_0 = \R^n \setminus \cup_j \hat B_j$.
Moreover, since the $\tilde B_j$'s were countable unions of disjoint bounded rectangles, and the class of bounded rectangles forms a semiring \cite{billingsley}, the $\hat B_j$'s are also countable unions of disjoint bounded rectangles.
Therefore, for a given $j$, we can write $\hat B_j = \cup_{i} S_i$, where the $S_i$'s are disjoint bounded rectangles.
Moreover, we can also write $\hat B_j^c = \cup_i T_i$, where the $T_i$'s are disjoint bounded rectangles.
Thus, we have
\aln{
\partial \hat B_j & = \partial \left( \cup_{i} S_i \right) = \R^n - \left( \cup_i S_i \right)^\circ - \left( \cup_i T_i \right)^\circ \\
& = \left(\cup_i S_i\right) \cup   \left( \cup_i T_i \right) - \left( \cup_i S_i \right)^\circ - \left( \cup_i T_i \right)^\circ \\
& \subseteq \left(\cup_i S_i\right) \cup   \left( \cup_i T_i \right) - \left( \cup_i S_i ^\circ\right) - \left( \cup_i T_i^\circ \right) \\
& = \left(\cup_i \left(S_i - S_i^\circ\right) \right) \cup   \left( \cup_i \left(T_i - T_i^\circ \right) \right) \\
& \subseteq \left(\cup_i \partial S_i\right) \cup   \left( \cup_i \partial T_i \right)
}
Since the boundary of a bounded rectangle clearly has Lebesgue measure zero, we have, for each $i$, $\lambda(\partial S_i) = \lambda( \partial T_i) = 0$, and we conclude that
\aln{
\lambda(\partial \hat B_j) \leq \sum_i \lambda \left( \partial S_i \right) +  \sum_i \lambda \left( \partial T_i \right) = 0,
}
implying that the boundary of the partition of $\R_n$ induced by $\tilde f_m$ has Lebesgue measure zero.
\end{proof}

\end{appendix}

\bibliographystyle{unsrt}

\bibliography{refs}

\end{document}